\pdfoutput=1

\documentclass[letterpaper, 10 pt, conference]{ieeeconf}  %

\IEEEoverridecommandlockouts                              %

\usepackage{graphicx} %
\usepackage{amsmath} %
\usepackage{amssymb}  %
\usepackage{xcolor}   %
\usepackage{algorithm}
\usepackage{algorithmic}

\usepackage{booktabs} %

\DeclareMathOperator*{\argmin}{arg\,min}

\newtheorem{assumption}{Assumption}
\newtheorem{theorem}{Theorem}

\newtheorem{proposition}{Proposition}
\newtheorem{corollary}{Corollary}

\newtheorem{remark}{Remark}
\newtheorem{example}{Example}

\title{\LARGE \bf
Interleaved Projected Gradient Descent for Safe Imitation Learning
}

\author{Shengfan Cao, Francesco Borrelli%
\thanks{Shengfan Cao and Francesco Borrelli are with the University of California, Berkeley,
        {\tt\small shengfan\_cao@berkeley.edu}, {\tt\small fborrelli@berkeley.edu}}%
}

\begin{document}

\maketitle
\thispagestyle{empty}
\pagestyle{empty}

\begin{abstract}

We propose an imitation-learning design for neural-network control policies under state and input constraints. Training alternates a standard imitation gradient step with a block of $k$ safety steps that pull the network's actions toward their projection onto the safe set; at run time, the controller is the trained network alone, with no safety filter. We analyze this scheme as inexact projected gradient descent in the space of policy actions. When the projected actions are recomputed at every safety step and each step moves the actions consistently toward the safe set, letting $k$ grow logarithmically yields asymptotic constraint satisfaction on the training states and bounds the distance to the constrained optimum of the imitation loss; with the projected actions held fixed, the same holds only if they are exactly representable by the network. On a nonlinear autonomous racing task, we compare our method with adding a weighted constraint-violation penalty to the imitation loss. With a sufficiently large weight, our method matches the lap time of unconstrained imitation while reducing the fraction of violating episodes from $15\%$ to $1\%$, about six times fewer than the penalty approach at its best weight. Its lap times are less sensitive to the weight, which instead sets how quickly violations vanish during training. In racing, the safety corrections are sparse and the conditions of the analysis do not hold; the gain arises instead through the data collected during training. These gains come at the cost of additional training computation.

\end{abstract}

\section{Introduction}

Safety filters~\cite{wabersich_predictive_2021,wabersich_datadriven_2023} provide a modular interface between model-based safety analysis and data-driven policies: a minimally invasive, projection-based filter modifies a nominal action only when needed to satisfy the safety constraints, provides formal guarantees under suitable assumptions, and can be paired with an arbitrary policy without retraining the control policy.
This modularity, however, has disadvantages.
An unsafe policy in certain domains triggers frequent or large interventions, and because the filter prioritizes constraint satisfaction over the task objective, the closed loop trajectory can drift far from the intended behavior while the filter is persistently active online.
This raises the question: \textit{can the safety mechanism be used during training to make the nominal policy safer in the first place?}

One line of work, \emph{shielded training}, places a safety mechanism in the learning loop, e.g., a temporal-logic shield~\cite{alshiekh_safe_2018}, a model-predictive backup controller~\cite{bastani_safe_2021}, or Hamilton--Jacobi reachability~\cite{fisac_general_2019}.
Because safety is enforced on the filtered action, the unfiltered policy does not inherit the guarantee, and the shield must be retained at deployment.
A second line \emph{internalizes} constraints into policy optimization.
These methods embed differentiable action projections~\cite{pham_optlayer_2018} or control-barrier-function quadratic programs~\cite{ma_learning_2022} in the policy; our prior work suggests that such exact structure is unnecessary, as a learned safe-set approximation already supplies useful safety gradients~\cite{cao2025cail}.
Approximate model predictive control (MPC) follows a related path, imitating an MPC expert with guarantees from robust design and statistical validation~\cite{hertneck_learning_2018} or from projecting or constraining the network output~\cite{paulson_approximate_2020,chen_approximating_2018}.
Many of these methods combine task and safety losses through a weighted sum, whose weight dictates the safety--performance trade-off and is problem dependent and sensitive to optimization dynamics.

We instead keep task optimization and safety correction as separate operations and interleave them during training.
Similar separations appear in constrained policy optimization: PCPO~\cite{yang_projection_2020} alternates objective and constraint updates for expected-cost constraints, SCPO~\cite{cao_constrained_2025} projects onto a safe set estimated in weight space from rollouts, and FOCOPS~\cite{zhang_focops_2020} solves for a constrained update in policy space before regressing the parameters toward it, a structure our safety block shares.
In contrast, we consider pointwise action constraints on a fixed imitation-learning dataset, with safety correction implemented as regression toward projected actions.

Our contributions are:
\begin{itemize}
    \item An interleaved training scheme that alternates performance steps with blocks of safety steps, avoiding scalarization of task and safety objectives.
    \item An inexact projected gradient descent (PGD) analysis in the space of policy outputs: safety blocks approximate projections with geometrically decaying error (Proposition~\ref{prop:geometric-projection-approx}), and a logarithmic safety schedule yields asymptotic feasibility on the training states and a distance bound to the constrained optimum, with fixed (Proposition~\ref{prop:inexact-pgd}) or refreshed (Theorem~\ref{thm:refreshed}) projection targets.
    \item An empirical study on a constrained optimal control example and on nonlinear safe imitation learning for autonomous racing, comparing the safety--performance trade-off and hyperparameter sensitivity against weighted-gradient, primal--dual, and safety-correction baselines.
\end{itemize}

\section{Problem Formulation}
\label{sec:problem-formulation}
Consider a discrete-time, time-invariant system 
\begin{equation}
    x_{\tau+1} = f(x_\tau, u_\tau),\quad \forall \tau \in \mathbb{N},
\end{equation}
subject to persistent state and input constraints
\begin{equation}
    x_\tau \in \mathcal{X},\quad u_\tau \in \mathcal{U},\quad \forall \tau \in \mathbb{N}.
\end{equation}
Let $\mathcal{K} \subseteq \mathcal{X}$ denote the maximal control invariant set, i.e., the largest set satisfying
\begin{equation}
    \mathcal{K} = \{ x \in \mathcal{X} \mid \exists u \in \mathcal{U},\ f(x, u) \in \mathcal{K}\}.
\end{equation}
Equivalently, $\mathcal{K}$ is the set of states from which some admissible input sequence keeps $x_\tau \in \mathcal{X}$ for all $\tau$.

We want to train a nonlinear parametric feedback policy $\pi_\theta$ through behavior cloning (BC). 
For the analysis, demonstrations are given by a fixed dataset $\mathcal{D} = \{(x_i, \bar{u}_i)\}_{i=1}^{N}$, and we denote its set of training states by $\mathcal{D}_X = \{x_i\}_{i=1}^{N}$. 
Assume $\mathcal{D}_X \subseteq \mathcal{K}$.
However, the demonstrations themselves need not be safe. In particular, it is not assumed that $f(x_i, \bar{u}_i) \in \mathcal{K},~\forall (x_i, \bar{u}_i) \in \mathcal{D}$. 

We consider the behavior cloning loss
\begin{equation}\label{eq:bc-loss}
    J(\theta) = \frac{1}{2N} \sum_{i} \bigl[ \Vert \pi_\theta(x_i) - \bar{u}_i \Vert_2^2 \bigr].
\end{equation}
For a nonlinear parameterization $\pi_\theta$, $J$ is generally nonconvex
as a function of $\theta$.

In addition, assume that a projection-based safety filter is given.
For every $x \in \mathcal{K}$, let $\mathcal{C}(x) \subseteq \mathcal{U}$ be a state-dependent, nonempty, closed, and convex set, satisfying
\begin{equation}
    u \in \mathcal{C}(x) \quad \Rightarrow \quad f(x, u) \in \mathcal{K}. 
\end{equation}
The safety filter performs a Euclidean projection of a proposed action
\begin{equation}\label{eq:safety-filter}
    \Pi_{\mathcal{C}(x)}(u) = \argmin_{u^{\text{proj}} \in \mathcal{C}(x)} \Vert u^{\text{proj}} - u \Vert_2^2,
\end{equation}

The corresponding feasible parameter set on the training states is \footnote{Persistent closed-loop safety additionally requires an initial state in $\mathcal{K}$ and satisfaction of the safe-action condition at every visited state under valid dynamics and safety constraints.}
\begin{equation}\label{eq:parameter-feasible-set}
    \Theta_{\text{safe}}
    \triangleq
    \left\{
    \theta:
    \pi_\theta(x_i)\in\mathcal C(x_i),
    \ \forall x_i \in \mathcal{D}_X
    \right\}.
\end{equation}
Our goal is to find a local solution of 
\begin{equation}\label{eq:constrained-policy-opt}
    \min_{\theta \in \Theta_{\text{safe}}} J(\theta).
\end{equation}
Define the squared distance $d_{\mathcal C}(v)=\|v-\Pi_{\mathcal C}(v)\|_2^2$ and the safety loss 
\begin{equation}\label{eq:safety-loss}
    \Phi_{\mathcal{C}}(\theta) = \frac{1}{2N} \sum_{i=1}^{N} d_{\mathcal{C}(x_i)}(\pi_\theta(x_i)).
\end{equation}

For differentiable $\pi_\theta$, its gradient is
\begin{equation}\label{eq:safety-loss-gradient}
    \nabla_\theta\Phi_{\mathcal C}(\theta)
    =
    \frac{1}{N}
    \sum_{i=1}^N
    D_\theta\pi_\theta(x_i)^\top
    \left(
        \pi_\theta(x_i)
        -
        \Pi_{\mathcal C(x_i)}
        \bigl(\pi_\theta(x_i)\bigr)
    \right),
\end{equation}
where $D_\theta \pi_\theta(x_i)$ denotes the Jacobian of the policy output $\pi_\theta(x_i)$ with respect to the parameters $\theta$. 

Define the performance and safety gradient operators
\begin{align}
    P_\eta(\theta)
    &\triangleq\theta-\eta\nabla J(\theta),\\
    S_\beta(\theta)
    &\triangleq\theta-\beta\nabla\Phi_{\mathcal C}(\theta).
\end{align}
Both operators act in parameter space.

If we were to solve~\eqref{eq:constrained-policy-opt} with projected gradient descent in the weight space, we would need a projection operator onto the feasible parameter set,
\begin{equation}\label{eq:projection-onto-weight-space-safe-set}
    \theta_{t+1}
    \in
    \Pi_{\Theta_{\rm safe}}
    \left(
        P_\eta(\theta_t)
    \right).
\end{equation}
However, 
although each action-space set $\mathcal C(x_i)$ is convex, the inverse image $\Theta_{\rm safe}$ is generally nonconvex for a nonlinear policy.
Consequently, projection onto $\Theta_{\rm safe}$ may be nonunique and computationally difficult.

Moreover, a gradient step on~\eqref{eq:safety-loss} does not in general
induce independent projections of the policy outputs toward their
action-space safe sets. To see this, for each $x_i \in \mathcal{D}_X$, let $r_i(\theta)=\pi_\theta(x_i)-\Pi_{\mathcal C(x_i)}(\pi_\theta(x_i))$ and $G_i(\theta) = D_\theta\pi_\theta(x_i)$.
Assuming the policy Jacobian is locally Lipschitz on the update neighborhood, a first-order expansion gives
\begin{equation}
\begin{aligned}
    \pi_{S_\beta(\theta)}(x_i)
    =
    \pi_\theta(x_i)
    -
    \frac{\beta}{N}
    \sum_{j=1}^N
    G_i(\theta)G_j(\theta)^\top r_j(\theta)
    +O(\beta^2).
\end{aligned}
\end{equation}
Thus, shared policy parameters couple the safety corrections across
training states, and the induced action-space displacement generally
differs from the direct soft-projection direction $-r_i(\theta)$.

Motivated by this observation, we instead interleave a single \emph{performance step} $P_\eta$ with a \emph{safety block} of $K_t$ \emph{safety steps}, where $K_t$ denotes the number of safety steps performed at iteration $t$ and increases according to a prescribed schedule. 
We analyze the block in Sec.~\ref{sec:inexact-pgd} as an approximate projection within inexact projected gradient descent.
The projection-approximation and primal-consistency conditions below connect this correction to constrained optimization.

\section{Inexact Projected Gradient Descent}
\label{sec:inexact-pgd}

We now interpret interleaved training as an \textit{inexact projected gradient method} in the policy-output space. The analysis is idealized: it assumes a fixed dataset, convex safe action sets with exact Euclidean projections, and relative-error conditions on how parameter updates move the outputs. It clarifies the roles of the performance step, the safety block, its schedule, and target refreshing; Sec.~\ref{sec:mechanism-ablation} discusses how closely a trained racing policy meets these conditions.
The analysis bounds the safety block's projection error and propagates it, together with relative distortion in the performance step, into an asymptotic bound on the distance to the constrained optimum.

Let
\(\mathcal{D}_X=\{x_i\}_{i=1}^N\) denote the fixed state dataset and define the stacked policy output
\begin{equation}
    v(\theta)
    = 
    \operatorname{col}\!\left(
    \pi_\theta(x_1),\ldots,\pi_\theta(x_N)
    \right).
\end{equation}
Write $v_t=v(\theta_t)$ for the output at outer iteration $t$.
For each state \(x_i\), let \(\mathcal{C}(x_i)\) be a nonempty, closed, and convex safe action set, and define
\begin{equation}
    \mathcal{C}_{\mathcal D}
    = 
    \mathcal{C}(x_1)\times\cdots\times\mathcal{C}(x_N).
\end{equation}
Equip output space with the normalized Euclidean norm
\begin{equation}
    \|v\|_{\mathcal D}^2
    = 
    \frac{1}{N}\sum_{i=1}^N \|v_i\|_2^2.
\end{equation}
Output-space gradients use the inner product $\langle a,b\rangle_{\mathcal D}=a^\top b/N$; parameter-space gradients are Euclidean.
Let $\bar{u} = \text{col}(\bar{u}_1, \dots, \bar{u}_N)$ and define
\begin{equation}
    \ell(v) = \frac{1}{2}\| v - \bar{u} \|_\mathcal{D}^2, \qquad J(\theta) = \ell(v(\theta)).
    \label{eq:output-bc-loss}
\end{equation}
Projection onto \(\mathcal{C}_{\mathcal D}\) decomposes pointwise:
\begin{equation}
    [\Pi_{\mathcal{C}_{\mathcal D}}(v)]_i
    =
    \Pi_{\mathcal C(x_i)}(v_i).
    \label{eq:dataset-projection}
\end{equation}

The nonlinear parameterization \(\theta\mapsto v(\theta)\) generally prevents a single weight update from realizing this projection exactly. We therefore approximate the projected policy by supervised regression.

Given \(\theta_{\mathrm{ref}}\), define the fixed projected targets
\begin{equation}
    \hat{u}_i
    =
    \Pi_{\mathcal C(x_i)}
    \bigl(\pi_{\theta_{\mathrm{ref}}}(x_i)\bigr)
\end{equation}
and the corresponding regression objective
\begin{equation}
    R_{\theta_{\mathrm{ref}}}(\theta)
    =
    \frac{1}{2N}
    \sum_{i=1}^N
    \|\pi_\theta(x_i)-\hat{u}_i\|_2^2.
    \label{eq:fixed-projection-loss}
\end{equation}
Unlike the safety loss \(\Phi_{\mathcal C}\), whose projection targets change with \(\theta\), the targets in \eqref{eq:fixed-projection-loss} are frozen throughout one safety block. Starting from \(\theta^{(0)}=\theta_{\mathrm{ref}}\), the block performs \(K\) gradient steps
\begin{equation}
    \theta^{(k+1)}
    =
    \theta^{(k)}
    -
    \beta
    \nabla R_{\theta_{\mathrm{ref}}}
    \bigl(\theta^{(k)}\bigr),
    \qquad k=0,\ldots,K-1,
    \label{eq:safety-regression-step}
\end{equation}
and we write
\begin{equation}
    \hat S_{K}(\theta_{\mathrm{ref}})
    \triangleq
    \theta^{(K)}
    \label{eq:fixed-target-block}
\end{equation}
for the resulting \emph{fixed-target safety block}; the hat matches the frozen targets \(\hat u_i\). One iteration of the scheme is then \(\theta_{t+1}=\hat S_{K_t}(P_\eta(\theta_t))\). \footnote{Note that \(K_t\) applications of the safety gradient operator of Sec.~\ref{sec:problem-formulation}, \(S_\beta^{K_t}(P_\eta(\theta_t))\) recomputes the projected actions at every step; Sec.~\ref{sec:refreshed} analyzes this variant.}

\subsection{Fixed Projection Targets}

Let
\begin{equation}
    G(\theta)
    =
    \operatorname{col}\bigl(G_1(\theta),\ldots,G_N(\theta)\bigr),
    \quad
    G_i(\theta)=D_\theta\pi_\theta(x_i),
\end{equation}
be the stacked policy Jacobian, and let
\begin{equation}
    H(\theta)=\frac1N G(\theta)G(\theta)^\top
\end{equation}
be the empirical neural tangent kernel (NTK) Gram matrix
\cite{jacot_neural_2018}.
For a fixed target, write
\begin{equation}
    r(\theta)
    =
    v(\theta)
    -
    \Pi_{\mathcal C_{\mathcal D}}\bigl(v(\theta_{\mathrm{ref}})\bigr),
    ~~
    R(\theta)
    =
    R_{\theta_{\mathrm{ref}}}(\theta)
    =
    \frac12\|r(\theta)\|_{\mathcal D}^2.
\end{equation}
Since $\nabla R(\theta)=N^{-1}G(\theta)^\top r(\theta)$,
for $r(\theta)\ne0$ we have
\begin{equation}
    \|\nabla R(\theta)\|_2^2
    =
    2\mu(\theta)R(\theta),
    \qquad
    \mu(\theta)
    =
    \frac{r(\theta)^\top H(\theta)r(\theta)}
         {\|r(\theta)\|_2^2}.
    \label{eq:restricted-pl}
\end{equation}
Thus $\mu(\theta)\ge\underline\mu>0$ is a Polyak--{\L}ojasiewicz (PL$^*$) condition
relative to zero \cite{liu_loss_2022}, imposed below only
along the block iterates.

\begin{proposition}[Geometric projection approximation]
\label{prop:geometric-projection-approx}
Fix an outer iteration $t$ and the fixed-target safety block
\eqref{eq:safety-regression-step}--\eqref{eq:fixed-target-block}.
Suppose $\nabla R$ is $L_S$-Lipschitz on a region containing
the update segments, and
\[
    0<\beta\le L_S^{-1},
    \qquad
    0<\beta\underline\mu<1.
\]
For $k=0,\ldots,K-1$, write $r_k=r(\theta^{(k)})$ and assume
$\mu_k=\mu(\theta^{(k)})\ge\underline\mu$ whenever $r_k\ne0$;
set $\mu_k=\underline\mu$ when $r_k=0$.
Then
\begin{equation}
    \|r_K\|_{\mathcal D}
    \le
    \left(\prod_{k=0}^{K-1}\sqrt{1-\beta\mu_k}\right)
    \|r_0\|_{\mathcal D}
    \le
    q^K\|r_0\|_{\mathcal D},
    \label{eq:geometric-projection-error}
\end{equation}
where $q=\sqrt{1-\beta\underline\mu}\in(0,1)$.
\end{proposition}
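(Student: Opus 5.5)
The plan is the standard descent-lemma argument for gradient descent under a PL$^*$ condition, applied step by step to the fixed-target objective $R$, which does not change during the block. Fix $k\in\{0,\ldots,K-1\}$. If $r_k=0$, then $\nabla R(\theta^{(k)})=N^{-1}G(\theta^{(k)})^\top r_k=0$. The iterate therefore does not move, $r_{k+1}=0$, and the claimed one-step contraction $\|r_{k+1}\|_{\mathcal D}\le\sqrt{1-\beta\mu_k}\,\|r_k\|_{\mathcal D}$ holds trivially with the convention $\mu_k=\underline\mu$. The case $r_k\neq0$ needs an actual argument.

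For $r_k\ne0$, I use the $L_S$-Lipschitz continuity of $\nabla R$ on the segment joining $\theta^{(k)}$ and $\theta^{(k+1)}$, which lies in the assumed region. The descent lemma then gives
\[
R(\theta^{(k+1)})\le R(\theta^{(k)})-\beta\Bigl(1-\tfrac{L_S\beta}{2}\Bigr)\|\nabla R(\theta^{(k)})\|_2^2\le R(\theta^{(k)})-\tfrac{\beta}{2}\|\nabla R(\theta^{(k)})\|_2^2,
\]
where the last inequality uses $\beta\le L_S^{-1}$.

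Next I substitute the identity \eqref{eq:restricted-pl}, $\|\nabla R\|_2^2=2\mu R$, evaluated at $\theta^{(k)}$. This yields $R(\theta^{(k+1)})\le(1-\beta\mu_k)R(\theta^{(k)})$. The factor $1-\beta\mu_k$ is nonnegative, because $R(\theta^{(k+1)})\ge0$ and $R(\theta^{(k)})>0$. Since $R=\tfrac12\|r\|_{\mathcal D}^2$, taking square roots gives $\|r_{k+1}\|_{\mathcal D}\le\sqrt{1-\beta\mu_k}\,\|r_k\|_{\mathcal D}$.

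Chaining these one-step bounds over $k=0,\ldots,K-1$ gives the first inequality in \eqref{eq:geometric-projection-error}. For the second, I bound each factor using $\mu_k\ge\underline\mu$, so that $\sqrt{1-\beta\mu_k}\le\sqrt{1-\beta\underline\mu}=q$. The hypothesis $0<\beta\underline\mu<1$ guarantees $q\in(0,1)$.

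The main point needing care is the applicability of the descent lemma. The Lipschitz bound must hold along each update segment, which is exactly what "a region containing the update segments" provides, so I invoke the hypothesis directly rather than trying to prove that the iterates stay in some ball. A minor subtlety is the normalization: the $1/N$ factors must match between $\nabla R=N^{-1}G^\top r$, the definition of $H$, and $\|\cdot\|_{\mathcal D}$. I would check that $\|\nabla R\|_2^2=N^{-2}r^\top GG^\top r=N^{-1}r^\top Hr$ and that $2\mu R=\mu\|r\|_2^2/N$; these agree by the definition of $\mu$, so \eqref{eq:restricted-pl} can be used verbatim.
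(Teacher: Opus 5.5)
Your proposal is correct and follows the paper's proof step for step. Both use the descent lemma with $\beta\le L_S^{-1}$, substitute the PL$^*$ identity \eqref{eq:restricted-pl} to get $R_{k+1}\le(1-\beta\mu_k)R_k$, dispose of the case $r_k=0$ trivially, get $1-\beta\mu_k\ge0$ from $R_{k+1}\ge0$, and then chain the bounds and take square roots; your explicit check of the $1/N$ normalizations is a small addition that the paper leaves implicit.
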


\begin{proof}
Write $R_k=R(\theta^{(k)})$.
The descent lemma and \eqref{eq:restricted-pl} give
\[
    R_{k+1}
    \le
    R_k-\frac{\beta}{2}\|\nabla R(\theta^{(k)})\|_2^2
    =
    (1-\beta\mu_k)R_k.
\]
If $r_k=0$, then $\nabla R(\theta^{(k)})=0$, so subsequent
iterates remain unchanged and the same inequality holds.
When $r_k\ne0$, nonnegativity of $R_{k+1}$ ensures
$1-\beta\mu_k\ge0$.
Iterating, using $2R_k=\|r_k\|_{\mathcal D}^2$, and taking
square roots proves \eqref{eq:geometric-projection-error}.
\end{proof}

A uniform $\underline\mu$ with $K\to\infty$ forces $R\to0$, so the hypotheses of Proposition~\ref{prop:geometric-projection-approx} implicitly require the frozen targets $\Pi_{\mathcal C_{\mathcal D}}(v(\theta_{\mathrm{ref}}))$ to be exactly representable by the network. When they are not, the residual approaches the orthogonal complement of the range of $G$, $\mu_k\to0$, and a fixed-target block can stall outside the feasible set even when $\Theta_{\rm safe}\neq\emptyset$.

The logarithmic schedule follows directly from the geometric convergence of the safety block. Let \(z_t=v(P_\eta(\theta_t))\) denote the policy output after the \(t\)-th performance step, and define the safety-block projection error
\begin{equation}
    \epsilon_t=\bigl\|v_{t+1}-\Pi_{\mathcal C_{\mathcal D}}(z_t)\bigr\|_{\mathcal D}.
    \label{eq:safety-block-error}
\end{equation}
 
\begin{corollary}[Logarithmic safety schedule]
\label{cor:log-schedule}
Suppose the hypotheses of Proposition~\ref{prop:geometric-projection-approx} hold at every outer iteration with constants \(L_S\), \(\underline\mu\), and \(\beta\) independent of \(t\), and that \(\|z_t-\Pi_{\mathcal C_{\mathcal D}}(z_t)\|_{\mathcal D}\le B\) for all \(t\). If \(K_t=\lceil c\log(t+2)\rceil\) with \(c>0\), then
\begin{equation}
    \epsilon_t\le B\,(t+2)^{-\gamma},
    \qquad \gamma=c\log(1/q)>0,
    \label{eq:log-schedule-rate}
\end{equation}
so \(\epsilon_t\to0\).
\end{corollary}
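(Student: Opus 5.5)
The plan is to identify the safety-block projection error $\epsilon_t$ with the final residual of Proposition~\ref{prop:geometric-projection-approx}, applied with reference point $\theta_{\mathrm{ref}}=P_\eta(\theta_t)$. With this choice, the block starts at $\theta^{(0)}=P_\eta(\theta_t)$, so $v(\theta^{(0)})=z_t$. The frozen targets are exactly $\Pi_{\mathcal C_{\mathcal D}}(z_t)$, using the pointwise decomposition \eqref{eq:dataset-projection}. The block output is $\theta_{t+1}=\hat S_{K_t}(P_\eta(\theta_t))=\theta^{(K_t)}$. Hence
\[
r_{K_t}=v_{t+1}-\Pi_{\mathcal C_{\mathcal D}}(z_t)
\quad\text{and}\quad
\epsilon_t=\|r_{K_t}\|_{\mathcal D}.
\]
The initial residual is $r_0=z_t-\Pi_{\mathcal C_{\mathcal D}}(z_t)$, whose norm is bounded by $B$ by hypothesis.

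Next I would invoke \eqref{eq:geometric-projection-error}. Its hypotheses hold at every $t$ with $t$-independent constants, so $q=\sqrt{1-\beta\underline\mu}\in(0,1)$ is the same for all $t$, and
\[
\epsilon_t\le q^{K_t}\|r_0\|_{\mathcal D}\le B\,q^{K_t}.
\]

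The remaining step is to convert the schedule into a power law. Because $q<1$, the map $K\mapsto q^K$ is nonincreasing, and $K_t=\lceil c\log(t+2)\rceil\ge c\log(t+2)$. Therefore
\[
q^{K_t}\le q^{c\log(t+2)}=\exp\bigl(c\log(t+2)\log q\bigr)=(t+2)^{-c\log(1/q)}=(t+2)^{-\gamma}.
\]
Here $\gamma>0$ because $c>0$ and $\log(1/q)>0$. Since $t+2\ge 2$, $(t+2)^{-\gamma}\to0$, and so $\epsilon_t\to0$.

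There is no real obstacle in this argument. The only points needing care are the bookkeeping that ties $\epsilon_t$ to $r_{K_t}$ (the targets are frozen at $P_\eta(\theta_t)$, not at $\theta_t$) and the direction of the inequality when replacing the ceiling by its argument, which relies on $q<1$.
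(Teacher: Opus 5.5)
Your proposal is correct and follows the paper's proof essentially step for step. Like the paper, you identify $\epsilon_t$ with the final residual $\|r_{K_t}\|_{\mathcal D}$ of the fixed-target block started at $\theta_{\mathrm{ref}}=P_\eta(\theta_t)$, apply Proposition~\ref{prop:geometric-projection-approx} with the common $q$, and use $K_t\ge c\log(t+2)$ together with $q<1$ to get $q^{K_t}\le(t+2)^{-\gamma}$; you additionally spell out the target identification via \eqref{eq:dataset-projection} and the monotonicity of $K\mapsto q^K$, which the paper leaves implicit.
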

 
\begin{proof}
At iteration \(t\), the block starts from \(\theta_{\mathrm{ref}}=P_\eta(\theta_t)\), so \(r_0=z_t-\Pi_{\mathcal C_{\mathcal D}}(z_t)\) and \(r_{K_t}=v_{t+1}-\Pi_{\mathcal C_{\mathcal D}}(z_t)\). Proposition~\ref{prop:geometric-projection-approx} with a common \(q\in(0,1)\) gives \(\epsilon_t\le q^{K_t}B\le q^{c\log(t+2)}B=B(t+2)^{-\gamma}\).
\end{proof}

We next connect the preceding result to projected gradient descent. 
Consider the constrained policy-output problem
\begin{equation}
    \min_{v\in\mathcal C_{\mathcal D}} \ell(v).
    \label{eq:output-constrained-problem}
\end{equation}
The behavior cloning objective~\eqref{eq:output-bc-loss} satisfies
$\nabla\ell(v)=v-\bar u$ and $L_P=\mu_P=1$.
We model the discrepancy between parameter and output gradient steps
through a relative-error condition, as in constrained optimization with
inexact gradient oracles~\cite{hallak_relative_2024}.

\begin{assumption}[Relative primal-step consistency]
\label{asm:primal-step-consistency}
Let $v_t=v(\theta_t)$ and let $z_t$ be the output after the performance
step. There exist $\kappa\in[0,1)$, effective output step sizes
$\alpha_t\in[\alpha_{\min},\alpha_{\max}]$ with
$0<\alpha_{\min}\le\alpha_{\max}\le1$, and $\sigma_t\ge0$ with
$\sigma_t\to0$, such that
\begin{align}
    z_t&=v_t-\alpha_t\nabla\ell(v_t)+\delta_t,
    \label{eq:inexact-primal-step}\\
    \|\delta_t\|_{\mathcal D}
    &\le\kappa\alpha_t\|\nabla\ell(v_t)\|_{\mathcal D}+\sigma_t.
    \label{eq:relative-primal-error}
\end{align}
\end{assumption}
The effective step size $\alpha_t$ accounts for the output geometry induced by $P_\eta$. For affine outputs $v(\theta)=A\theta+b$ with $mI\preceq H\preceq MI$, $m>0$, and $s\eta\le1$ for $s=(M+m)/2$, the assumption holds with $\alpha_t=s\eta$, $\kappa=(M-m)/(M+m)$, and $\sigma_t=0$, since $\delta_t=-\eta(H-sI)\nabla\ell(v_t)$.

Let \(v^\star=\Pi_{\mathcal C_{\mathcal D}}(\bar u)\) be the unique solution of~\eqref{eq:output-constrained-problem}, \(d_\star=\operatorname{dist}_{\mathcal D}(\bar u,\mathcal C_{\mathcal D})\) the distance of the demonstrations from the feasible set, and \(e_t=\|v_t-v^\star\|_{\mathcal D}\).

\begin{proposition}[Asymptotic error of interleaved training]
\label{prop:inexact-pgd}
Suppose Assumption~\ref{asm:primal-step-consistency} holds and \(\limsup_{t\to\infty}\epsilon_t\le\bar\epsilon\), with \(\epsilon_t\) as in~\eqref{eq:safety-block-error}. Then \(\limsup_{t\to\infty}\operatorname{dist}_{\mathcal D}(v_t,\mathcal C_{\mathcal D})\le\bar\epsilon\) and
\begin{equation}
    \limsup_{t\to\infty}e_t
    \le\frac{\kappa\,d_\star}{1-\kappa}
    +\frac{\bar\epsilon}{\alpha_{\min}(1-\kappa)}.
    \label{eq:asymptotic-output-error}
\end{equation}
In particular, \(v_t\to v^\star\) if \(\bar\epsilon=0\) and \(\kappa d_\star=0\).
\end{proposition}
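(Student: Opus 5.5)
The plan is to write one outer iteration as an inexact projected gradient step in output space and derive a scalar recursion for $e_t$. By definition of $\epsilon_t$ in~\eqref{eq:safety-block-error}, we have $v_{t+1}=\Pi_{\mathcal C_{\mathcal D}}(z_t)+w_t$ with $\|w_t\|_{\mathcal D}\le\epsilon_t$. Since $\Pi_{\mathcal C_{\mathcal D}}(z_t)\in\mathcal C_{\mathcal D}$, this gives $\operatorname{dist}_{\mathcal D}(v_{t+1},\mathcal C_{\mathcal D})\le\epsilon_t$. Taking $\limsup$ yields the feasibility claim at once.

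For the distance bound, I would use the fixed-point characterization of $v^\star=\Pi_{\mathcal C_{\mathcal D}}(\bar u)$. The variational inequality $\langle \bar u-v^\star, c-v^\star\rangle_{\mathcal D}\le0$ holds for all $c\in\mathcal C_{\mathcal D}$. Hence $v^\star=\Pi_{\mathcal C_{\mathcal D}}\bigl(v^\star-\alpha(v^\star-\bar u)\bigr)$ for every $\alpha>0$, in particular for $\alpha=\alpha_t$. Here the projection is taken in the $\|\cdot\|_{\mathcal D}$ geometry, which is a scaled Euclidean norm, so it agrees with the pointwise projection~\eqref{eq:dataset-projection} and is nonexpansive.

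Using Assumption~\ref{asm:primal-step-consistency} with $\nabla\ell(v)=v-\bar u$, I would then chain three estimates:
\begin{align*}
e_{t+1}
&\le \epsilon_t+\bigl\|\Pi_{\mathcal C_{\mathcal D}}(z_t)-\Pi_{\mathcal C_{\mathcal D}}\bigl(v^\star-\alpha_t(v^\star-\bar u)\bigr)\bigr\|_{\mathcal D}\\
&\le \epsilon_t+(1-\alpha_t)e_t+\|\delta_t\|_{\mathcal D}\\
&\le \epsilon_t+(1-\alpha_t)e_t+\kappa\alpha_t(e_t+d_\star)+\sigma_t .
\end{align*}
The second line uses nonexpansiveness and $\alpha_t\le1$. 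The third line uses $\|v_t-\bar u\|_{\mathcal D}\le e_t+\|v^\star-\bar u\|_{\mathcal D}=e_t+d_\star$. The result is
\[
e_{t+1}\le\bigl(1-\alpha_t(1-\kappa)\bigr)e_t+\kappa\alpha_t d_\star+\sigma_t+\epsilon_t .
\]

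The main obstacle is passing to the $\limsup$ when $\alpha_t$ varies, since the $d_\star$ term is weighted by $\alpha_t$ rather than by a constant. I would handle it by centering at $a=\kappa d_\star/(1-\kappa)$. Subtracting $a$ from both sides gives
\[
e_{t+1}-a\le\bigl(1-\alpha_t(1-\kappa)\bigr)(e_t-a)+\sigma_t+\epsilon_t .
\]
Passing to positive parts and using $1-\alpha_t(1-\kappa)\le\rho:=1-\alpha_{\min}(1-\kappa)\in[0,1)$ yields
\[
(e_{t+1}-a)^+\le\rho\,(e_t-a)^+ +\sigma_t+\epsilon_t .
\]
This linear recursion with contraction $\rho<1$ shows that $(e_t-a)^+$ is bounded. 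Its $\limsup$ is at most $\limsup_t(\sigma_t+\epsilon_t)/(1-\rho)\le\bar\epsilon/(\alpha_{\min}(1-\kappa))$, because $\sigma_t\to0$. This is exactly~\eqref{eq:asymptotic-output-error}.

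The final claim follows by substitution. If $\bar\epsilon=0$ and $\kappa d_\star=0$, the right-hand side of~\eqref{eq:asymptotic-output-error} vanishes, so $e_t\to0$.
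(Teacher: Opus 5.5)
Your proposal is correct and follows the paper's proof. It uses the same feasibility bound $\operatorname{dist}_{\mathcal D}(v_{t+1},\mathcal C_{\mathcal D})\le\epsilon_t$, the same fixed-point identity $v^\star=\Pi_{\mathcal C_{\mathcal D}}((1-\alpha_t)v^\star+\alpha_t\bar u)$ with nonexpansiveness, and the same one-step recursion. Your centering at $\kappa d_\star/(1-\kappa)$ followed by a positive-part contraction simply spells out the paper's ``iterates to'' step; it is the same device the paper uses in the proof of Theorem~\ref{thm:refreshed}.
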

 
\begin{proof}
This is the standard analysis of projected gradient descent with relative gradient error and inexact projection~\cite{hallak_relative_2024,scm2011}: since $v^\star=\Pi_{\mathcal C_{\mathcal D}}((1-\alpha_t)v^\star+\alpha_t\bar u)$, nonexpansiveness and $\|\nabla\ell(v_t)\|_{\mathcal D}\le e_t+d_\star$ give $e_{t+1}\le[1-\alpha_t(1-\kappa)]e_t+\kappa\alpha_t d_\star+\sigma_t+\epsilon_t$, which iterates to~\eqref{eq:asymptotic-output-error}, and $\operatorname{dist}_{\mathcal D}(v_{t+1},\mathcal C_{\mathcal D})\le\epsilon_t$.
\end{proof}

Under Corollary~\ref{cor:log-schedule}, $\bar\epsilon=0$, the outputs approach the feasible set on the training states, at asymptotic distance at most $\kappa d_\star/(1-\kappa)$ from the optimum, while a constant block length $K$ gives $\bar\epsilon\le Bq^K$.

\subsection{Refreshed Projection Targets}
\label{sec:refreshed}
Proposition~\ref{prop:inexact-pgd} needs the projected safe action from every intermediate output $z_t$ to be exactly representable, which a finite network fitting many states rarely achieves. Recomputing the targets at every safety step removes this requirement.

The refreshed variant replaces $\hat S_{K_t}$ with $S_\beta^{K_t}$, i.e., gradient descent on $\Phi_{\mathcal C}(\theta)=\frac12\operatorname{dist}_{\mathcal D}^2(v(\theta),\mathcal C_{\mathcal D})$, recomputing the projected actions at every safety step. Starting from $\theta^{(0)}=P_\eta(\theta_t)$, write $v^{(k)}=v(\theta^{(k)})$, $p_k=\Pi_{\mathcal C_{\mathcal D}}(v^{(k)})$, and $\tilde r_k=v^{(k)}-p_k$, so that $v^{(0)}=z_t$ and $v^{(K_t)}=v_{t+1}$. We model each safety step through a relative-error condition analogous to Assumption~\ref{asm:primal-step-consistency}.

\begin{assumption}[Relative safety-step consistency]
\label{asm:safety-step-consistency}
There exist $\kappa_S\in[0,1)$ and $b_{\min}\in(0,1]$ such that every safety step with $\tilde r_k\neq0$ satisfies
\begin{equation}
    v^{(k+1)}=v^{(k)}-b_k\tilde r_k+\xi_k,
    \qquad
    \|\xi_k\|_{\mathcal D}\le\kappa_S b_k\|\tilde r_k\|_{\mathcal D},
    \label{eq:safety-step-consistency}
\end{equation}
for some $b_k\in[b_{\min},1]$.
\end{assumption}

Minimizing over $b_k$, the smallest admissible $\kappa_S$ for a step is the sine of the angle between $v^{(k)}-v^{(k+1)}$ and $\tilde r_k$.
To first order in $\beta$, this is the angle between $H(\theta^{(k)})\tilde r_k$ and $\tilde r_k$. Since $\tilde r_k$ lies in the normal cone of $\mathcal C_{\mathcal D}$ at $p_k$, the condition concerns only directions that point out of the feasible set.

\begin{proposition}[Refreshed safety block]
\label{prop:refreshed-block}
Under Assumption~\ref{asm:safety-step-consistency}, for every $K$,
\begin{align}
    \operatorname{dist}_{\mathcal D}(v^{(K)},\mathcal C_{\mathcal D})
    &\le\rho^K\operatorname{dist}_{\mathcal D}(v^{(0)},\mathcal C_{\mathcal D}),
    \label{eq:refreshed-contraction}\\
    \|p_K-p_0\|_{\mathcal D}
    &\le\frac{\kappa_S}{1-\kappa_S}\operatorname{dist}_{\mathcal D}(v^{(0)},\mathcal C_{\mathcal D}),
    \label{eq:refreshed-drift}
\end{align}
where $\rho=1-b_{\min}(1-\kappa_S)<1$.
\end{proposition}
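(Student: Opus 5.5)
The plan is to prove a one-step contraction and a one-step drift bound, then iterate both; everything rests on the variational characterization of the projection onto the closed convex set $\mathcal C_{\mathcal D}$. Write $d_k=\operatorname{dist}_{\mathcal D}(v^{(k)},\mathcal C_{\mathcal D})=\|\tilde r_k\|_{\mathcal D}$. If $\tilde r_k=0$ at some step, I will treat that step separately: the gradient of $\Phi_{\mathcal C}$ vanishes there, the iterates stop moving, and both bounds hold trivially from that step on. So assume $\tilde r_k\neq0$ and apply Assumption~\ref{asm:safety-step-consistency}.

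\emph{Contraction.} Since $p_k\in\mathcal C_{\mathcal D}$,
\[
d_{k+1}\le\|v^{(k+1)}-p_k\|_{\mathcal D}
=\|(1-b_k)\tilde r_k+\xi_k\|_{\mathcal D}
\le(1-b_k)d_k+\kappa_S b_k d_k .
\]
This uses $b_k\le1$, so that $1-b_k\ge0$. The right-hand side equals $\bigl(1-b_k(1-\kappa_S)\bigr)d_k\le\rho\,d_k$, because $b_k\ge b_{\min}$. Iterating over the $K$ steps gives \eqref{eq:refreshed-contraction}. Moreover $\rho<1$ because $b_{\min}>0$ and $\kappa_S<1$.

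\emph{Drift.} The projection is nonexpansive, so
\[
\|p_{k+1}-p_k\|_{\mathcal D}\le\|v^{(k+1)}-v^{(k)}\|_{\mathcal D}=\|-b_k\tilde r_k+\xi_k\|_{\mathcal D}\le(1+\kappa_S)b_kd_k .
\]
This bound is too crude: combined with the contraction it only yields $\|p_K-p_0\|_{\mathcal D}\le\frac{1+\kappa_S}{1-\kappa_S}d_0$, whose constant is not $\kappa_S/(1-\kappa_S)$. The fix is to remove the normal-cone component before applying nonexpansiveness. Because $\tilde r_k$ lies in the normal cone of $\mathcal C_{\mathcal D}$ at $p_k$, every point $p_k+s\tilde r_k$ with $s\ge0$ projects to $p_k$. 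In particular $w_k:=v^{(k)}-b_k\tilde r_k=p_k+(1-b_k)\tilde r_k$ satisfies $\Pi_{\mathcal C_{\mathcal D}}(w_k)=p_k$, again using $1-b_k\ge0$. Now $v^{(k+1)}=w_k+\xi_k$, so nonexpansiveness gives
\[
\|p_{k+1}-p_k\|_{\mathcal D}\le\|\xi_k\|_{\mathcal D}\le\kappa_S b_kd_k\le\kappa_S d_k .
\]
Summing with the triangle inequality and using $d_k\le\rho^kd_0$ from the contraction step,
\[
\|p_K-p_0\|_{\mathcal D}\le\kappa_S\sum_{k\ge0}\rho^kd_0=\frac{\kappa_S}{b_{\min}(1-\kappa_S)}d_0 .
\]

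The main obstacle is the stray factor $1/b_{\min}$. I would try to remove it by keeping $b_k$ inside the sum, i.e.\ bounding $\sum_k b_kd_k$ rather than $\sum_k d_k$. The one-step contraction gives $d_{k+1}\le d_k-b_k(1-\kappa_S)d_k$, hence $b_kd_k\le(d_k-d_{k+1})/(1-\kappa_S)$. This telescopes to $\sum_{k<K}b_kd_k\le d_0/(1-\kappa_S)$, and therefore $\|p_K-p_0\|_{\mathcal D}\le\frac{\kappa_S}{1-\kappa_S}d_0$, which is exactly \eqref{eq:refreshed-drift} with no dependence on $b_{\min}$. Steps with $\tilde r_k=0$ contribute zero to both sums and do not affect the argument.
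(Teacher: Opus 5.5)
Your proof is correct and follows the paper's argument. It uses the same one-step contraction through $\|v^{(k+1)}-p_k\|_{\mathcal D}$, the same observation that $v^{(k)}-b_k\tilde r_k$ projects to $p_k$ (so nonexpansiveness bounds the drift by $\|\xi_k\|_{\mathcal D}$), and a telescoping sum that removes $b_{\min}$. The only cosmetic difference is where you telescope: you use the realized distances via $b_k(1-\kappa_S)d_k\le d_k-d_{k+1}$, while the paper uses the product bounds $\pi_k=\prod_{j<k}(1-b_j(1-\kappa_S))$; both yield the constant $\kappa_S/(1-\kappa_S)$.
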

\begin{proof}
For $b\in[0,1]$, $v^{(k)}-b\tilde r_k$ lies on the segment from $v^{(k)}$ to $p_k$ and hence projects to $p_k$. Nonexpansiveness of the projection then gives $\|p_{k+1}-p_k\|_{\mathcal D}\le\|\xi_k\|_{\mathcal D}\le\kappa_S b_k\|\tilde r_k\|_{\mathcal D}$, and
$\|\tilde r_{k+1}\|_{\mathcal D}\le\|v^{(k+1)}-p_k\|_{\mathcal D}=\|(1-b_k)\tilde r_k+\xi_k\|_{\mathcal D}\le\bigl(1-b_k(1-\kappa_S)\bigr)\|\tilde r_k\|_{\mathcal D}$, which proves~\eqref{eq:refreshed-contraction}. With $\pi_k=\prod_{j<k}(1-b_j(1-\kappa_S))$, summing the drift bound gives $\kappa_S\|\tilde r_0\|_{\mathcal D}\sum_k b_k\pi_k=\frac{\kappa_S}{1-\kappa_S}\|\tilde r_0\|_{\mathcal D}(1-\pi_K)$, which proves~\eqref{eq:refreshed-drift}. A zero residual stops the block.
\end{proof}

Proposition~\ref{prop:refreshed-block} is what refreshing buys. With fixed targets, more safety steps cannot help once the remaining correction lies outside what the network can represent, so the output may stay infeasible however many steps the block takes. A refreshed block needs only a realizable component along each outward normal, so its residual keeps contracting, at the price of a drift~\eqref{eq:refreshed-drift} proportional to the constraint violation.

\begin{theorem}[Interleaving with refreshed targets]
\label{thm:refreshed}
Let $\theta_{t+1}=S_\beta^{K_t}(P_\eta(\theta_t))$, where $K_t$ are positive integers with $K_t\to\infty$. Suppose Assumption~\ref{asm:primal-step-consistency} holds, Assumption~\ref{asm:safety-step-consistency} holds throughout every safety block with common constants $\kappa_S\in[0,1)$ and $b_{\min}\in(0,1]$, and $\operatorname{dist}_{\mathcal D}(z_t,\mathcal C_{\mathcal D})\le B<\infty$ for all $t$. Then, with $\rho=1-b_{\min}(1-\kappa_S)$, $\operatorname{dist}_{\mathcal D}(v_{t+1},\mathcal C_{\mathcal D})\le\rho^{K_t}B\to0$. If, moreover, $\kappa+\kappa_S<1$, then
\begin{equation}
\limsup_{t\to\infty}e_t\le\frac{\kappa+\kappa_S}{1-\kappa-\kappa_S}\,d_\star;
\label{eq:refreshed-asymptotic-error}
\end{equation}
in particular, under this condition $v_t\to v^\star$ if $d_\star=0$.
\end{theorem}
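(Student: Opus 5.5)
The plan is to follow the proof of Proposition~\ref{prop:inexact-pgd}, using Proposition~\ref{prop:refreshed-block} in place of the fixed-target projection error. The feasibility claim comes first and is immediate. Apply \eqref{eq:refreshed-contraction} to the block started at $v^{(0)}=z_t$ with $K=K_t$ steps, and use $\operatorname{dist}_{\mathcal D}(z_t,\mathcal C_{\mathcal D})\le B$. This gives $\operatorname{dist}_{\mathcal D}(v_{t+1},\mathcal C_{\mathcal D})\le\rho^{K_t}B$. Since $\rho<1$ and $K_t\to\infty$, the bound tends to zero.

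For the error bound, I will write $v_{t+1}=p_{K_t}+\tilde r_{K_t}$ and split
\[
\|v_{t+1}-\Pi_{\mathcal C_{\mathcal D}}(z_t)\|_{\mathcal D}\le\|\tilde r_{K_t}\|_{\mathcal D}+\|p_{K_t}-p_0\|_{\mathcal D}.
\]
The first term is at most $\rho^{K_t}B\to0$. By \eqref{eq:refreshed-drift}, the second is at most $\frac{\kappa_S}{1-\kappa_S}\operatorname{dist}_{\mathcal D}(z_t,\mathcal C_{\mathcal D})$. Refreshing therefore behaves like an inexact projection whose error is \emph{relative}: it is proportional to the violation created by the performance step, not to a vanishing $\bar\epsilon$.

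Next I bound that violation by the gradient, as in Assumption~\ref{asm:primal-step-consistency}. Because $v_t$ is already nearly feasible, I compare $z_t$ with the feasible point $\Pi_{\mathcal C_{\mathcal D}}(v_t)$:
\[
\operatorname{dist}_{\mathcal D}(z_t,\mathcal C_{\mathcal D})\le\operatorname{dist}_{\mathcal D}(v_t,\mathcal C_{\mathcal D})+\alpha_t\|\nabla\ell(v_t)\|_{\mathcal D}+\|\delta_t\|_{\mathcal D}.
\]
The first term is $o(1)$ by the feasibility claim, and $\|\nabla\ell(v_t)\|_{\mathcal D}\le e_t+d_\star$. The recursion then reuses the identity $v^\star=\Pi_{\mathcal C_{\mathcal D}}((1-\alpha_t)v^\star+\alpha_t\bar u)$ and nonexpansiveness from Proposition~\ref{prop:inexact-pgd}. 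This gives
\[
e_{t+1}\le(1-\alpha_t)e_t+\|\delta_t\|_{\mathcal D}+\|p_{K_t}-p_0\|_{\mathcal D}+o(1),
\]
which becomes a linear recursion $e_{t+1}\le(1-\alpha_t c_1)e_t+\alpha_t c_2 d_\star+o(1)$. A standard limsup argument, using $\alpha_t\ge\alpha_{\min}$, $\sigma_t\to0$ and $\rho^{K_t}\to0$, then gives $\limsup e_t\le(c_2/c_1)d_\star$. The final claim follows by setting $d_\star=0$.

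The main obstacle is getting exactly the constant $(\kappa+\kappa_S)/(1-\kappa-\kappa_S)$. The crude route above, which multiplies the lumped drift factor $\kappa_S/(1-\kappa_S)$ by $(1+\kappa)\alpha_t$, yields $c_1=1-\kappa-2\kappa_S$. That is weaker than the target and needs a stronger hypothesis. To sharpen it, I would first redo the drift accounting step by step. Proposition~\ref{prop:refreshed-block} bounds the total drift by $\kappa_S\sum_k b_k\|\tilde r_k\|_{\mathcal D}$, and the inner products of the $\xi_k$ with the normal-cone residuals $\tilde r_k$ can be exploited. The goal is to show that the block output equals $\Pi_{\mathcal C_{\mathcal D}}(z_t)$ plus an error of size at most $\kappa_S\alpha_t\|\nabla\ell(v_t)\|_{\mathcal D}+o(1)$. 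The refreshed safety error then adds to the primal relative error exactly as a gradient error with relative constant $\kappa+\kappa_S$. With this, the recursion has $c_1=1-\kappa-\kappa_S$ and $c_2=\kappa+\kappa_S$, matching \eqref{eq:refreshed-asymptotic-error}. If this sharpening fails, the fallback is to state the bound with the weaker constant that the crude route produces.
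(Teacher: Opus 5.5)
Your feasibility argument and the overall skeleton match the paper's: the block error is at most $\rho^{K_t}B+\frac{\kappa_S}{1-\kappa_S}D_t$ with $D_t=\operatorname{dist}_{\mathcal D}(z_t,\mathcal C_{\mathcal D})$, and this is substituted into the one-step estimate of Proposition~\ref{prop:inexact-pgd}. As written, however, the proposal does not reach the stated constant, and the fix you plan targets the wrong estimate. The paper uses the drift bound \eqref{eq:refreshed-drift} unchanged, with $c_S=\kappa_S/(1-\kappa_S)$, so the loss is not there. It is in your bound on $D_t$. Comparing $z_t$ with $\Pi_{\mathcal C_{\mathcal D}}(v_t)$ charges the whole exact step $\alpha_t\|\nabla\ell(v_t)\|_{\mathcal D}\le\alpha_t(e_t+d_\star)$ to the violation, and that is where the extra $\kappa_S$ in $1-\kappa-2\kappa_S$ comes from. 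The missing idea is to write $z_t=(1-\alpha_t)v_t+\alpha_t\bar u+\delta_t$ and use that $\operatorname{dist}_{\mathcal D}(\cdot,\mathcal C_{\mathcal D})$ is convex and $1$-Lipschitz. Equivalently, compare with the feasible point $(1-\alpha_t)\Pi_{\mathcal C_{\mathcal D}}(v_t)+\alpha_t v^\star$. This gives
\[
D_t\le(1-\alpha_t)\operatorname{dist}_{\mathcal D}(v_t,\mathcal C_{\mathcal D})+\alpha_t d_\star+\|\delta_t\|_{\mathcal D}.
\]
An exact gradient step moves toward $\bar u$, which lies at distance $d_\star$ from the set, so it creates violation at most $\alpha_t d_\star$ whatever $e_t$ is. Thus $e_t$ enters $D_t$ only through $\|\delta_t\|_{\mathcal D}\le\kappa\alpha_t(e_t+d_\star)+\sigma_t$. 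Keeping $c_S$ as is, the coefficient of $e_t$ becomes $1-\alpha_t(1-\kappa-c_S\kappa)=1-\alpha_t\frac{1-\kappa-\kappa_S}{1-\kappa_S}$. The coefficient of $\alpha_t d_\star$ becomes $\kappa+c_S(1+\kappa)=\frac{\kappa+\kappa_S}{1-\kappa_S}$. Their ratio is exactly the constant in \eqref{eq:refreshed-asymptotic-error}.

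Your planned sharpening also cannot succeed on its own terms. Its target is that the block output lies within $\kappa_S\alpha_t\|\nabla\ell(v_t)\|_{\mathcal D}+o(1)$ of $\Pi_{\mathcal C_{\mathcal D}}(z_t)$, and Assumption~\ref{asm:safety-step-consistency} does not imply this. Take a half-space and $v_t=v^\star$, with $\bar u-v^\star$ along the outward normal. Let $\delta_t$ have norm $\kappa\alpha_t d_\star$ along the same normal, which Assumption~\ref{asm:primal-step-consistency} allows. Then $D_t=(1+\kappa)\alpha_t d_\star$. Purely tangential $\xi_k$ of maximal admissible norm, with constant $b_k=b$, move $p_k$ by a total of $\kappa_S D_t\bigl(1-(1-b)^K\bigr)\to\kappa_S(1+\kappa)\alpha_t d_\star$. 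For $\kappa>0$ this exceeds $\kappa_S\alpha_t\|\nabla\ell(v_t)\|_{\mathcal D}=\kappa_S\alpha_t d_\star$ by a non-vanishing amount. So, as it stands, your argument proves only the fallback bound $\frac{\kappa+\kappa_S}{1-\kappa-2\kappa_S}d_\star$ under $\kappa+2\kappa_S<1$, not the theorem. The convexity bound on $D_t$ closes the gap without any change to the drift estimate.
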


\begin{proof}
Write $D_t=\operatorname{dist}_{\mathcal D}(z_t,\mathcal C_{\mathcal D})$, $q_t=\rho^{K_t}$, and $c_S=\kappa_S/(1-\kappa_S)$. Proposition~\ref{prop:refreshed-block} gives $\operatorname{dist}_{\mathcal D}(v_{t+1},\mathcal C_{\mathcal D})\le q_tD_t\le q_tB\to0$, which proves feasibility. For the optimality bound, the drift bound~\eqref{eq:refreshed-drift} and the triangle inequality bound the block error $\epsilon_t=\|v_{t+1}-\Pi_{\mathcal C_{\mathcal D}}(z_t)\|_{\mathcal D}$ of Proposition~\ref{prop:inexact-pgd} as
\begin{equation}
\epsilon_t\le\operatorname{dist}_{\mathcal D}(v_{t+1},\mathcal C_{\mathcal D})+c_SD_t\le q_tB+c_SD_t.
\label{eq:refreshed-block-error}
\end{equation}
Since $\nabla\ell(v_t)=v_t-\bar u$, Assumption~\ref{asm:primal-step-consistency} gives $z_t=(1-\alpha_t)v_t+\alpha_t\bar u+\delta_t$ with $\|\delta_t\|_{\mathcal D}\le\kappa\alpha_t(e_t+d_\star)+\sigma_t$. Distance to a nonempty closed convex set is convex and $1$-Lipschitz; hence, using $0<\alpha_t\le1$ and, for $t\ge1$, the feasibility estimate,
\begin{align}
D_t
&\le(1-\alpha_t)\operatorname{dist}_{\mathcal D}(v_t,\mathcal C_{\mathcal D})+\alpha_td_\star+\|\delta_t\|_{\mathcal D}
\nonumber\\
&\le\kappa\alpha_te_t+\alpha_t(1+\kappa)d_\star+\sigma_t+(1-\alpha_t)q_{t-1}B.
\label{eq:refreshed-preblock-distance}
\end{align}
Substituting~\eqref{eq:refreshed-block-error} and~\eqref{eq:refreshed-preblock-distance} into the one-step estimate $e_{t+1}\le[1-\alpha_t(1-\kappa)]e_t+\kappa\alpha_td_\star+\sigma_t+\epsilon_t$ of Proposition~\ref{prop:inexact-pgd}, and using $1+c_S=(1-\kappa_S)^{-1}$, yields
\[
e_{t+1}\le(1-a\alpha_t)e_t+\alpha_t\frac{\kappa+\kappa_S}{1-\kappa_S}d_\star+\tau_t,
\qquad a=\frac{1-\kappa-\kappa_S}{1-\kappa_S},
\]
where $\tau_t=(1+c_S)\sigma_t+Bq_t+c_S(1-\alpha_t)Bq_{t-1}\to0$. With $E=(\kappa+\kappa_S)d_\star/(1-\kappa-\kappa_S)$, this reads $e_{t+1}-E\le(1-a\alpha_t)(e_t-E)+\tau_t$. Since $0<a\le1$ and $0<\alpha_{\min}\le\alpha_t\le1$, $w_t=\max\{e_t-E,0\}$ satisfies $w_{t+1}\le(1-a\alpha_{\min})w_t+\tau_t$, so $w_t\to0$, which proves~\eqref{eq:refreshed-asymptotic-error}.
\end{proof}

Refreshing thus converts the absolute projection error $\bar\epsilon$ of Proposition~\ref{prop:inexact-pgd}, which vanishes only if the frozen targets are realizable, into a relative error: it permits the projected targets to drift while preserving asymptotic feasibility. Under the stated uniform consistency and boundedness assumptions, feasibility holds for every $\kappa_S<1$, whereas the optimality bound requires $\kappa+\kappa_S<1$. In particular, when the primal step is exact ($\kappa=0$), the same condition $\kappa_S<1$ suffices for both conclusions, with $\limsup_te_t\le\kappa_Sd_\star/(1-\kappa_S)$. The bound remains conservative because it accumulates projection drift through norm inequalities.

\begin{remark}[Relation to BC on projected demonstrations]
\label{rem:bc-projected}
Because $\ell$ and $\mathcal C_{\mathcal D}$ decompose across samples, $v^\star=\Pi_{\mathcal C_{\mathcal D}}(\bar u)$, which plain BC on projected demonstrations fits in the least-squares sense. When $v^\star$ is not exactly representable, that fitting error is indifferent to the constraints and can leave $\mathcal C_{\mathcal D}$, whereas refreshed safety steps require only a realizable component along each outward normal (Proposition~\ref{prop:refreshed-block}). The safety regression also acts on bank states that carry no demonstration.
\end{remark}

Algorithm~\ref{alg:log-interleaving} summarizes both variants. With the projected targets recomputed once after each performance step and held fixed during the safety block, it computes $\theta_{t+1}=\hat S_{K_t}(P_\eta(\theta_t))$ (Proposition~\ref{prop:inexact-pgd}); recomputing them before every safety step (line~7) gives the refreshed variant $\theta_{t+1}=S_\beta^{K_t}(P_\eta(\theta_t))$ (Theorem~\ref{thm:refreshed}).

\begin{algorithm}[t]
\caption{Log-Interleaved Projected-Target Training}
\label{alg:log-interleaving}
\begin{algorithmic}[1]
\REQUIRE Initial weights \(\theta_0\); learning rates \(\eta,\beta\); schedule constant \(c\)
\FOR{\(t=0,1,2,\ldots\)}
    \STATE \textbf{Performance step:} \(\theta_t^P \leftarrow P_\eta(\theta_t) = \theta_t - \eta\nabla_\theta J(\theta_t)\)
    \STATE \textbf{Freeze reference:} \(\theta_{\mathrm{ref}} \leftarrow \theta_t^P\), \(\ \theta \leftarrow \theta_{\mathrm{ref}}\)
    \STATE \textbf{Compute projected targets:} \(\hat{u}_i \leftarrow \Pi_{\mathcal C(x_i)}\bigl(\pi_{\theta_{\mathrm{ref}}}(x_i)\bigr)\) for \(i=1,\ldots,N\)
    \STATE \(K_t \leftarrow \lceil c\log(t+2)\rceil\)
    \FOR[safety block]{\(k=0,\ldots,K_t-1\)}
        \STATE \textit{(Refreshed variant:} \(\hat{u}_i \leftarrow \Pi_{\mathcal C(x_i)}\bigl(\pi_{\theta}(x_i)\bigr)\)\textit{)}
        \STATE \textbf{Safety step:} \(\theta \leftarrow \theta - \beta\nabla_\theta R_{\theta_{\mathrm{ref}}}(\theta)\), \ where \(R_{\theta_{\mathrm{ref}}}(\theta) = \dfrac{1}{2N}\sum_{i=1}^{N}\|\pi_\theta(x_i)-\hat{u}_i\|_2^2\)
    \ENDFOR
    \STATE \(\theta_{t+1}\leftarrow\theta\) \COMMENT{\(\hat S_{K_t}(\theta_t^P)\) or \(S_\beta^{K_t}(\theta_t^P)\)}
\ENDFOR
\end{algorithmic}
\end{algorithm}

\section{Connection with Fused Safety Penalties}
\label{sec:fused-connection}

The output-space formulation also clarifies the effect of combining performance and safety gradients. Define
\begin{equation}
\varphi(v)=\frac12\|v-\Pi_{\mathcal C_{\mathcal D}}(v)\|_{\mathcal D}^2.
\end{equation}
Since $\nabla\varphi(v)=v-\Pi_{\mathcal C_{\mathcal D}}(v)$, a fused output update is
\begin{equation}
U_{\eta,\rho}(v)=v-\eta\nabla\ell(v)-\rho\nabla\varphi(v).
\label{eq:fused-update}
\end{equation}
For $\eta>0$ and $\rho\ge0$, setting $\lambda=\rho/\eta$ gives
\begin{equation}
U_{\eta,\rho}(v)=v-\eta\nabla\ell_\lambda(v),\qquad
\ell_\lambda(v)=\ell(v)+\lambda\varphi(v).
\label{eq:fused-penalty-equivalence}
\end{equation}
Thus, the relative gradient weight defines a quadratic-penalty relaxation of the constrained problem~\cite{nocedal_numerical_2006}.

\begin{proposition}[Finite-penalty bias]
\label{prop:fused-bias}
Let $\mathcal C_{\mathcal D}$ be nonempty, closed, and convex, and let $\ell$ be differentiable. If a constrained optimizer $v^\star\in\argmin_{v\in\mathcal C_{\mathcal D}}\ell(v)$ satisfies $\nabla\ell(v^\star)\ne0$, then it is not a stationary point of $\ell_\lambda$ for any finite $\lambda\ge0$.
\end{proposition}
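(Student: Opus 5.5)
The proof is a direct computation of $\nabla\ell_\lambda$ at $v^\star$. The one ingredient is that the penalty $\varphi$ contributes no gradient at feasible points. Since $v^\star\in\mathcal C_{\mathcal D}$, we have $\Pi_{\mathcal C_{\mathcal D}}(v^\star)=v^\star$. The gradient formula $\nabla\varphi(v)=v-\Pi_{\mathcal C_{\mathcal D}}(v)$, valid because $\varphi$ is half the squared distance to a nonempty closed convex set and hence continuously differentiable, therefore gives $\nabla\varphi(v^\star)=0$. Consequently
\[
\nabla\ell_\lambda(v^\star)=\nabla\ell(v^\star)+\lambda\nabla\varphi(v^\star)=\nabla\ell(v^\star)\neq0
\]
for every finite $\lambda\ge0$. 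Hence $v^\star$ is not stationary for $\ell_\lambda$, which is the claim.

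I would make three points explicit.

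\begin{itemize}
\item Differentiability of $\varphi$ everywhere. This follows from the standard Moreau-envelope fact that $\frac12\operatorname{dist}^2$ to a closed convex set has gradient $v-\Pi(v)$, which is $1$-Lipschitz. The paper already uses this formula in defining the fused update.
\item Differentiability of $\ell_\lambda$. Since $\ell$ is differentiable by hypothesis, $\ell_\lambda$ is differentiable, so stationarity means exactly $\nabla\ell_\lambda=0$.
\item The role of optimality of $v^\star$. We do not actually need it beyond feasibility. Its role is interpretive: at the constrained optimizer $-\nabla\ell(v^\star)$ lies in the normal cone $N_{\mathcal C_{\mathcal D}}(v^\star)$. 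The penalty gradient vanishes at $v^\star$, so it cannot supply the multiplier needed to balance this nonzero $\nabla\ell(v^\star)$.
\end{itemize}

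I do not expect a real obstacle; the only delicate point is justifying $\nabla\varphi=v-\Pi(v)$. A short remark could also be added: for the behavior cloning loss, $\nabla\ell(v^\star)=v^\star-\bar u$, which is nonzero exactly when $d_\star>0$. So the fused scheme is biased whenever the demonstrations are infeasible. The stationary points of $\ell_\lambda$ then satisfy $v-\bar u+\lambda(v-\Pi(v))=0$ and approach $v^\star$ only as $\lambda\to\infty$.
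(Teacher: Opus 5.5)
Your proposal is correct and follows the paper's proof: feasibility of $v^\star$ gives $\Pi_{\mathcal C_{\mathcal D}}(v^\star)=v^\star$, hence $\nabla\varphi(v^\star)=0$ and $\nabla\ell_\lambda(v^\star)=\nabla\ell(v^\star)\ne0$ for every finite $\lambda\ge0$. Your observation that the argument uses only feasibility of $v^\star$, not its optimality, also holds for the paper's proof, and your justification of $\nabla\varphi(v)=v-\Pi_{\mathcal C_{\mathcal D}}(v)$ makes explicit a step the paper leaves implicit.
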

\begin{proof}
Feasibility gives $\Pi_{\mathcal C_{\mathcal D}}(v^\star)=v^\star$, so $\nabla\varphi(v^\star)=0$ and $\nabla\ell_\lambda(v^\star)=\nabla\ell(v^\star)\ne0$.
\end{proof}

At an active constrained optimum, the performance gradient can be balanced by the normal cone of the feasible set. The squared-distance penalty instead has zero gradient throughout that set. For $\lambda>0$, stationarity of $\ell_\lambda$ implies
\begin{equation}
\operatorname{dist}_{\mathcal D}(v_\lambda,\mathcal C_{\mathcal D})
=\frac{\|\nabla\ell(v_\lambda)\|_{\mathcal D}}{\lambda}.
\label{eq:fused-bias-rate}
\end{equation}
A uniform bound on the performance gradient therefore gives an $O(1/\lambda)$ residual bound along these stationary points.

Increasing $\lambda$ also changes the step-size requirement. For convex $L_P$-smooth $\ell$, the objective $\ell_\lambda$ is at most $(L_P+\lambda)$-smooth, so a sufficient condition for the fused gradient step to be nonexpansive is
\begin{equation}
0<\eta<\frac{2}{L_P+\lambda},\qquad
\eta L_P+\rho<2.
\label{eq:fused-stability}
\end{equation}
Penalty continuation can increase $\lambda$ while decreasing $\eta$, at the cost of another schedule and potentially poorer conditioning.

Fixed-target interleaving gives the correction budget a separate role: $K_t$ controls projection accuracy while the performance update is held fixed. Under Corollary~\ref{cor:log-schedule}, logarithmic growth of $K_t$ drives the projection error to zero. Proposition~\ref{prop:inexact-pgd} then gives asymptotic feasibility on the training states and a bound on distance to the optimum governed by relative primal-step distortion. The following example, which uses refreshed targets, illustrates the analysis and the penalty bias~\eqref{eq:fused-bias-rate} on a problem whose constrained optimum is known.

\begin{example}[Learning a constrained optimal plan]
\label{ex:pointmass}
Consider the double integrator $x_{k+1}=Ax_k+Bu_k$, with $A=\left[\begin{smallmatrix}1&0.1\\0&1\end{smallmatrix}\right]$ and $B=\left[\begin{smallmatrix}0.005\\0.1\end{smallmatrix}\right]$, subject to $\|x_k\|_\infty\le15$ and $|u_k|\le1$. A network maps the state $x$ to a plan $U=(u_0,\dots,u_{N-1})$ with $N=20$, and the controller applies $u_0$. Performance is the discounted finite-horizon cost
\begin{equation}
J(U;x)=\sum_{k=0}^{N-1}\gamma^k\bigl(\|x_k-x_{\rm ref}\|_2^2+u_k^2\bigr)+\gamma^N\|x_N-x_{\rm ref}\|_{Q_N}^2,
\label{eq:pm-mpc}
\end{equation}
where $x_0=x$, $x_{\rm ref}=(14,0)$, $Q_N$ solves the discrete-time algebraic Riccati equation for the stage cost (the infinite-horizon LQR cost-to-go), and $\gamma=0.9$. The predicted states are linear in $U$, so $J$ is a strictly convex quadratic, $J(U;x)=\frac12\|U-U_{\rm unc}(x)\|_M^2+{\rm const}$, with $U_{\rm unc}(x)$ the unconstrained minimizer and $M\succ0$ the Hessian. This is the loss $\ell$ of Sec.~\ref{sec:problem-formulation} with labels $\bar u_i=U_{\rm unc}(x_i)$ and metric $M$. The feasible plans $\mathcal C(x)$ satisfy $\|x_k\|_\infty\le14$, a margin below the true bound, and $|u_k|\le1$ for all $k$, and end in the maximal control invariant set of that box. Each $\mathcal C(x)$ is a polytope, so projections in the metric $M$ are QPs, which we solve exactly, and the constrained optimum $v^\star$ stacks the optimal solutions $\argmin_{U\in\mathcal C(x_i)}J(U;x_i)$. The learner never sees $v^\star$: performance steps use only $\nabla J$, and safety steps use only projections of the network's own plans. We train on 2000 states with $K_t=\lceil c\log(t+2)\rceil$ refreshed safety steps per iteration, using Adam with a learning rate decaying from $10^{-3}$ to $10^{-5}$ over 4000 iterations, tripled for the safety steps.
\end{example}

\begin{table}[t]
\centering
\caption{Example~\ref{ex:pointmass}, mean over three seeds: gradient steps, RMS distance to $v^\star$, and closed-loop episodes leaving the true bounds. Fused with $\lambda=0$ minimizes the cost alone; interleaving is evaluated after 200 iterations.}
\label{tab:pointmass}
\footnotesize
\setlength{\tabcolsep}{3pt}
\begin{tabular}{@{}rccc c rccc@{}}
\toprule
\multicolumn{4}{c}{Fused} && \multicolumn{4}{c}{Interleaved}\\
\cmidrule{1-4}\cmidrule{6-9}
$\lambda$ & Steps & Dist. & Viol. && $c$ & Steps & Dist. & Viol.\\
\midrule
0    & 12k & 4.24 & 58\% && 1 & 1.2k & 0.18 & \textbf{0\%}\\
10   & 4k & 0.39 & 49\% && 2 & 2.0k & 0.15 & 1\%\\
30   & 4k & 0.16 & 24\% && 4 & 3.8k & 0.14 & \textbf{0\%}\\
100  & 4k & \textbf{0.11} & 10\% && & & & \\
300  & 4k & 0.13 & 2\% && & & & \\
1000 & 4k & 0.19 & 1\% && & & & \\
\bottomrule
\end{tabular}
\end{table}

Without access to $v^\star$, both methods move the learned plans from a distance of $4.24$ under the cost alone to within $0.11$--$0.39$ of $v^\star$ (Table~\ref{tab:pointmass}). Fused training trades safety against optimality through $\lambda$: small weights leave the plans outside the constraints, as the $O(1/\lambda)$ bias~\eqref{eq:fused-bias-rate} predicts, large weights slow convergence, and no weight is safe on every seed within 4000 steps. Interleaving with any safety budget is safe or nearly safe with fewer gradient steps, and a larger budget brings the plans closer to $v^\star$: the budget needs only to be large enough rather than tuned. Under plain gradient descent, which the analysis assumes, the contrast in distance to $v^\star$ is sharper: fused training stalls between $0.5$ and $1.1$, as its step must shrink with $\lambda$~\eqref{eq:fused-stability}, while interleaving keeps approaching $v^\star$ as the budget grows ($0.41$, $0.24$, and $0.16$ for $c=1,2,4$). Here corrections are dense, as the unconstrained plan is infeasible at about $70\%$ of the training states. If the policy instead outputs a single action and imitates saturated LQR, a correction is needed at only $1.5\%$ of the states, over $93\%$ of each safety block's output change falls on states that need none ($\kappa_S\ge0.96$), and more safety steps no longer reduce the distance to $v^\star$: Assumption~\ref{asm:safety-step-consistency} fails and interleaving stalls.

\section{Application to Autonomous Racing}
\label{sec:experiments}

Unlike Example~\ref{ex:pointmass}, racing has nonlinear dynamics, a sampling-based safety filter, and sparse corrections, so several assumptions of Sec.~\ref{sec:inexact-pgd} fail. We first summarize the main results and then examine the role of each component.

\subsection{Setup}
\label{sec:racing}

A simulated dynamic bicycle races on an L-shaped track under full-state feedback. A model predictive contouring control (MPCC)~\cite{liniger_optimization_2015} expert supplies demonstrations through DAgger~\cite{ross_dagger_2011} (expert weight $0.99^t$ at epoch $t$, buffer of up to 40\,000 samples). Policies with three width-128 ReLU hidden layers are trained for 300 epochs at learning rates that start at $10^{-3}$ and decay tenfold per 100 epochs. Both methods weight the safety loss by $\lambda$ and evaluate it on 16 states per minibatch, half from the batch and half from a fixed bank shared across methods. Fused training applies Adam to the sum of imitation and safety losses; interleaving applies one Adam imitation step followed by $K_t=\lfloor\log_2(t+2)\rfloor\le8$ SGD safety steps, with $t$ the DAgger epoch and targets refreshed before every step, so $\lambda$ scales its safety step size. The two methods thus also differ in the safety optimizer. The targets come from a predictive safety filter that returns the action closest to the proposed one among those certified safe: applying the action and then a backup policy that steers to the centerline while braking must keep a 15-step prediction within the track tightened by $0.02$ m. This is a projection onto a nonconvex certified set, which the filter approximates by enumerating 57 candidates around the proposed action.

Final policies are evaluated without a filter from 120 initial conditions per seed for up to 800 steps ($0.1$ s) or 12 laps. We report violating episodes ($|x_{\rm lat}|>0.55$ m or $|e_\psi|>\pi/2$) and the mean $\pm$ SD of seed-level lap times, counting only laps completed before the first violation. The expert laps in $5.171$ s but violates in $31/120$ episodes, and filtering it at runtime removes all violations at $5.219$ s. Plain BC laps in $5.144$ s but violates in $55/360$, and cloning the filtered expert still violates in $24/360$ at $5.189$ s, so safe labels alone do not give a safe policy. We seek the lap time of BC with the safety of the filtered expert.

\subsection{Main Results}
\label{sec:racing-main}

\begin{table}[t]
\centering
\caption{Ablations at $\lambda=10$ with 120 starts per seed: six seeds for the middle block and three for the others. The references use no safety loss. Lap time is mean $\pm$ SD of seed-level means (s). All methods refresh targets except Inter (log., fixed).}
\label{tab:mechanism-ablation}
\footnotesize
\setlength{\tabcolsep}{3pt}
\begin{tabular}{@{}l c r r@{}}
\toprule
Method & Safety updates & Violations & Lap time\\
\midrule
Plain BC on raw demos &-- &55/360 &$5.144\pm.002$\\
Plain BC on filtered demos &-- &24/360 &$5.189\pm.007$\\
\midrule
Fused &1 &149/720 &$5.209\pm.023$\\
Inter ($K=1$) &1 &84/720 &$5.142\pm.006$\\
Fused (compute matched) &$K_t$ &41/720 &$5.192\pm.014$\\
Inter (log., fixed) &$K_t$ &44/720 &$5.142\pm.006$\\
Inter (log., refreshed) &$K_t$ &\textbf{7/720} &$\mathbf{5.139\pm.006}$\\
\midrule
Fused, $\eta/(1+\lambda)$ &1 &271/360 &$5.336\pm.039$\\
Primal--dual, $\lambda_T\!\approx\!4.4$ &1 &56/360 &$5.201\pm.009$\\
Primal--dual, $\lambda_T\!\approx\!19$ &1 &46/360 &$5.178\pm.020$\\
Inter ($K=7$) &7 &9/360 &$5.150\pm.020$\\
\bottomrule
\end{tabular}
\end{table}

Refreshed interleaving with a logarithmic budget, Inter (log., refreshed), is the only configuration that approaches the safety of the filtered expert while keeping the lap time of unconstrained BC (Table~\ref{tab:mechanism-ablation}). At $\lambda=10$ it violates in $1.0\%$ of episodes at $5.139$ s, against $15.3\%$ at $5.144$ s for plain BC and $6.7\%$ at $5.189$ s for BC on filtered demonstrations. At this weight, every fused variant, whether with a fixed, compute-matched, or dual-adapted weight, is both slower ($5.18$--$5.21$ s) and less safe ($5.7$--$20.7\%$). Compared at its best weight, $\lambda=0.1$, fused training matches the lap time of BC ($5.136$ s) but still violates in $5.8\%$ of episodes, about six times as often as interleaving. The pattern holds across seeds (Table~\ref{tab:lambda-sweep}): at $\lambda\ge10$, no interleaved seed exceeds five violating episodes out of 120, whereas every fused seed has at least eight and one has 73. It also holds during training (Fig.~\ref{fig:training-curves}): fused training still averages $26$--$27\%$ violating episodes after epoch 200 at $\lambda\in\{10,100\}$ and $9\%$ when compute matched, whereas Inter (log., refreshed) records none after epoch 230. Fig.~\ref{fig:trajectories} shows the resulting closed-loop trajectories.

\begin{figure*}[t]
\centering
\includegraphics{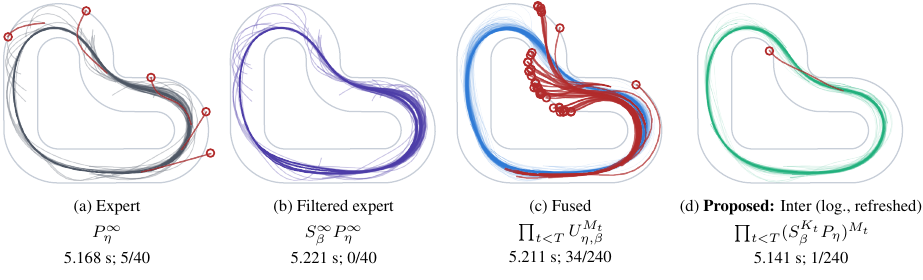}
\caption{Closed-loop trajectories on the 40-start illustration bank, with six learned-policy seeds at $\lambda=10$ and one run per expert configuration. Gray lines mark the track boundaries; each episode is drawn until its first violation, with the final $2.5$ s before it in red and the violation marked by a circle. Labels give the training update, mean lap time, and violating episodes, where $U_{\eta,\beta}(\theta)=\theta-\eta\nabla J(\theta)-\beta\nabla\Phi_{\mathcal C}(\theta)$ is the fused step, $T=300$ is the number of epochs, and $M_t$ is the number of minibatches in epoch $t$. This bank shares two starts with the primary 120-start bank.}
\label{fig:trajectories}
\end{figure*}

\begin{figure}[t]
\centering
\includegraphics[width=\columnwidth]{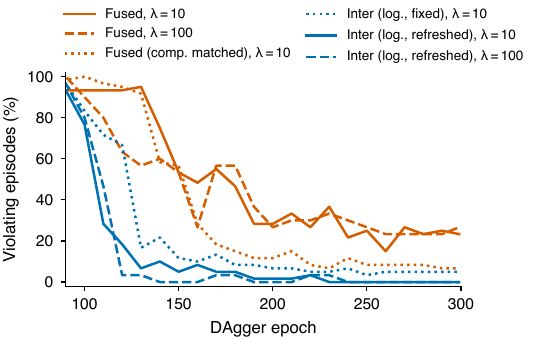}
\caption{Violating episodes during training (mean over seeds; six seeds at $\lambda=10$, three at $\lambda=100$), evaluated without a filter on ten starts every ten epochs. Before epoch 90, nearly every episode violates in all configurations.}
\label{fig:training-curves}
\end{figure}

\subsection{Ablations and Analysis}
\label{sec:mechanism-ablation}

\begin{table}[t]
\centering
\caption{Safety-weight sensitivity: violating episodes per seed (out of 120 each) and lap time, mean $\pm$ SD of seed-level means (s). The $\lambda=100$ runs and seeds four to six at $\lambda=10$ were added after the initial sweep.}
\label{tab:lambda-sweep}
\footnotesize
\setlength{\tabcolsep}{3pt}
\begin{tabular}{@{}c l r l r@{}}
\toprule
&\multicolumn{2}{c}{Fused Gradient}&\multicolumn{2}{c}{Interleaving (proposed)}\\
\cmidrule(lr){2-3}\cmidrule(l){4-5}
$\lambda$ & Violations & Lap time & Violations & Lap time\\
\midrule
0.01 & 10/7/18 & $5.149\pm.010$ & 8/9/12 & $5.139\pm.004$\\
0.1  & 10/3/8 & $5.136\pm.007$ & 20/3/27 & $5.146\pm.010$\\
1    & 33/2/7 & $5.159\pm.006$ & 8/20/30 & $5.151\pm.012$\\
10   & 8/25/12/73/18/13 & $5.209\pm.023$ & \textbf{0/0/0/0/5/2} & $\mathbf{5.139\pm.006}$\\
100  & 28/27/13 & $5.203\pm.011$ & \textbf{0/3/3} & $\mathbf{5.137\pm.002}$\\
\bottomrule
\end{tabular}
\end{table}

\begin{figure}[t]
\centering
\includegraphics[width=\columnwidth]{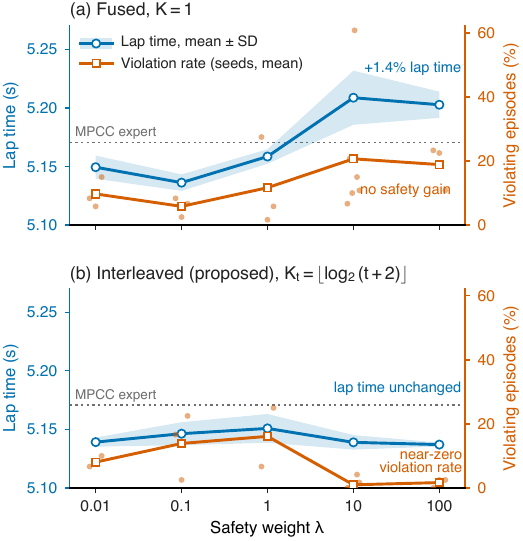}
\caption{Lap time and violation rate across safety weights. Bands show SD of seed-level lap means; dots show individual seed violation rates and squares their means. The shared dotted line marks the expert's $5.171$ s lap time and $25.8\%$ violation rate on the respective axes.}
\label{fig:lambda-sweep}
\end{figure}

\textit{Safety weight.} The weight plays different roles in the two methods (Table~\ref{tab:lambda-sweep}, Fig.~\ref{fig:lambda-sweep}). For fused training, $\lambda$ is a trade-off: raising it from $0.1$ to $10$ slows laps from $5.136$ to $5.209$ s, yet violations rise rather than fall. For interleaving, $\lambda$ scales the safety step and only sets how quickly violations vanish: lap times stay within $5.137$--$5.151$ s, while violations fall from $137/1080$ at $\lambda\le1$, where the decaying step leaves corrections incomplete within 300 epochs, to $13/1080$ at $\lambda\in\{10,100\}$. As in Example~\ref{ex:pointmass}, the interleaved budget needs only to be large enough.

\textit{Safety compute and adaptive weights.} Taking $K_t$ fused steps per minibatch, which matches the filter calls of interleaving, lowers fused violations to $41/720$, still well above $7/720$ (two-sided exact permutation test on seed-level counts, $p=0.0087$), and laps remain $0.05$ s slower. A primal--dual variant that raises $\lambda$ by dual ascent on the filter residual~\cite{tessler_reward_2019} ends at $\lambda\approx4.4$ or $19$ and behaves like a fixed weight, and shrinking the fused step to $\eta/(1+\lambda)$, as~\eqref{eq:fused-stability} requires, keeps the policy from learning to drive.

\textit{Refreshing and budget.} One safety step per minibatch gives $84/720$, a full logarithmic budget with fixed targets $44/720$, and with refreshed targets $7/720$. A constant block of $K=7$ steps does about as well as the logarithmic schedule ($9/360$), so within 300 epochs the total correction effort matters more than its growth.

\textit{Sparse corrections and data.} Racing lies outside the conditions of Theorem~\ref{thm:refreshed}. As in the single-action variant of Example~\ref{ex:pointmass}, corrections are sparse: the filter intervenes on only $6$--$7.5\%$ of the safety-block states throughout training, and $88$--$95\%$ of each safety block's squared output change falls on states with zero residual, so $\kappa_S\gtrsim0.94$ and $\kappa+\kappa_S<1$ fails. The gain over plain BC is nonetheless reproducible, 
and it acts through the data that DAgger collects. Fixing the final buffer of each run and retraining plain BC on it from scratch (three seeds), the buffer of an interleaved run ($K=7$) yields $13/360$ violating episodes at $5.139\pm.023$ s, close to the interleaved policies themselves ($9/360$), whereas the buffer of a plain-BC run yields $110/360$, twice the violations of the DAgger-trained BC policies, so retraining itself adds variance. The safety steps, whose step size is largest in early epochs, shape the policy while the buffer forms, and the states it then visits make the imitation targets safer. Fused and primal--dual training, with the same filter and expert, end less safe (Table~\ref{tab:mechanism-ablation}), although we did not retrain on their buffers.

\section{Conclusion}
\label{sec:conclusion}

Separating task optimization from safety correction turns imitation learning into inexact projected gradient descent in the space of policy outputs: when safety corrections are dense, the safety steps alone drive the policy toward the constrained optimum, and when they are sparse, as in racing, interleaving still yields policies that are safer than penalty-based training at the lap time of unconstrained imitation, through the data it collects, with a safety weight that needs only to be large enough. The guarantees concern the training states, so a runtime filter remains a sensible fallback. Extending them to sparse corrections, through architectures that localize safety updates or safety targets aligned with the expert's behavior, is left for future work.

\bibliographystyle{ieeetr}
\bibliography{references}

\end{document}